\documentclass[USenglish,cleveref,nameinlink,autoref,thm-restate]{lipics-v2021}
\usepackage{graphicx} 
\usepackage{xspace}
\usepackage{amsmath,amssymb,amsthm,amsfonts}
\usepackage{todonotes}
\usepackage{complexity}
\usepackage{hyperref}
\usepackage[nocompress,noadjust]{cite}
\usepackage[inline]{enumitem}

\hideLIPIcs

\makeatletter
\def\renewtheorem#1{%
  \expandafter\let\csname#1\endcsname\relax
  \expandafter\let\csname c@#1\endcsname\relax
  \gdef\renewtheorem@envname{#1}
  \renewtheorem@secpar
}
\def\renewtheorem@secpar{\@ifnextchar[{\renewtheorem@numberedlike}{\renewtheorem@nonumberedlike}}
\def\renewtheorem@numberedlike[#1]#2{\newtheorem{\renewtheorem@envname}[#1]{#2}}
\def\renewtheorem@nonumberedlike#1{  
\def\renewtheorem@caption{#1}
\edef\renewtheorem@nowithin{\noexpand\newtheorem{\renewtheorem@envname}{\renewtheorem@caption}}
\renewtheorem@thirdpar
}
\def\renewtheorem@thirdpar{\@ifnextchar[{\renewtheorem@within}{\renewtheorem@nowithin}}
\def\renewtheorem@within[#1]{\renewtheorem@nowithin[#1]}
\makeatother

\usepackage{mdframed}
\theoremstyle{plain}
\renewtheorem{lemma}[theorem]{Lemma}
\crefname{lemma}{Lemma}{Lemmas}
\Crefname{lemma}{Lemma}{Lemmas}
\crefname{corollary}{Corollary}{Corollaries}
\Crefname{corollary}{Corollary}{Corollaries}

\usepackage{apptools,thmtools}
\let\origrestatable=\restatable
\renewcommand{\restatable}[1][]{%
  \origrestatable[%
    \ifstrempty{#1}%
      {$\star$}%
      {\IfAppendix{\hyperref[#1]{$\star$}}{\hyperref[#1*]{$\star$}}}%
  ]%
}

\renewcommand{\S}{\mathcal{S}}

\usepackage{xcolor}
\definecolor{defblue}{rgb}{0.121,0.47,0.705}
\DeclareTextFontCommand{\emph}{\color{defblue}\em}
\hypersetup{colorlinks=true,
    linkcolor=defblue,
    anchorcolor=defblue,
    citecolor=defblue,
    filecolor=defblue,
    menucolor=defblue,
    urlcolor=defblue,
    bookmarksopen=true,
    bookmarksopenlevel=2,
    bookmarksnumbered=true,
    plainpages=false
    }
\BeforeBeginEnvironment{mdframed}{\nolinenumbers}
\AfterEndEnvironment{mdframed}{\nolinenumbers}
\keywords{Storyline visualization, (local) crossing number, parameterized complexity}
\title{On the Complexity of Extending Storylines}
\author{Alexander~Dobler}{TU Wien, Austria}{adobler@ac.tuwien.ac.at}{}{}
\author{Siddharth~Gupta}{BITS Pilani, K K Birla Goa Campus, India}{siddharthg@goa.bits-pilani.ac.in}{}{}
\author{Philipp~Kindermann}{Trier University, Germany}{kindermann@uni-trier.de}{}{}
\author{Fabrizio~Montecchiani}{University of Perugia, Italy}{fabrizio.montecchiani@unipg.it}{}{}
\author{Martin~Nöllenburg}{TU Wien, Austria}{noellenburg@ac.tuwien.ac.at}{}{}

\authorrunning{A.~Dobler, S.~Gupta, P.~Kindermann, F.~Montecchiani, and M.~Nöllenburg}

\Copyright{Alexander~Dobler, Siddharth~Gupta, Philipp~Kindermann, Fabrizio~Montecchiani, and Martin~Nöllenburg}

\newtheorem{property}{Property}

\ccsdesc[300]{Human-centered computing~Graph drawings}
\ccsdesc[300]{Theory of computation~Parameterized complexity and exact algorithms}

\acknowledgements{This work started at the Bertinoro Workshop on Graph Drawing BWGD 2025.}

\newcommand{\defquestion}[3]{
	\vspace{-2mm}
	\noindent\begin{mdframed}
		\begin{minipage}{0.96\linewidth}
			\begin{tabular*}{\linewidth}{@{\extracolsep{\fill}}lr} \textsc{#1} & \\ \end{tabular*}
			{\bf{Input:}} #2 \\
			{\bf{Question:}} #3
		\end{minipage}
	\end{mdframed}
	\vspace{2mm}
}

\newcommand{\gs}{\textsc{StoryLineExtension}\xspace}

\newcommand{\gss}{\textsc{SLE}\xspace}

\newcommand{\ls}{\textsc{LocalStoryLineExtension}\xspace}

\newcommand{\lss}{\textsc{LSLE}\xspace}

\graphicspath{{figures/}}

\nolinenumbers

\begin{document}

\maketitle

\begin{abstract}
Storyline layouts visualize temporal interactions by drawing each character as an x-monotone curve and enforcing that the participants of every meeting form a contiguous vertical group. We study a drawing extension variant in which a layout of a sub-storyline is fixed and has to be extended by inserting $k$ missing characters while adhering to all meeting constraints. 
We consider two optimization objectives: minimizing the number of additional crossings introduced to complete the storyline, and minimizing the maximum number of crossings incurred by any single character. For both variants, we analyze the parameterized complexity with respect to the natural   parameter $k$ representing the size of the missing information, as well as additional structural parameters such as the number of characters $\sigma$ per time instant  and the number of  meetings $\mu$ involving missing characters. We contribute a broad collection of results, most of them tight, drawing a nearly complete picture of the complexity landscape of these extension problems.

\subparagraph{Generative AI Declaration}
Generative AI was not used in the preparation of this article.

\end{abstract}

\section{Introduction}

Storyline layouts (or \emph{storylines} for short) 
are visualizations of temporal interaction patterns of a set of actors. 
They have been initially proposed to show the plots of novels and movies~\cite{tm-dcosv-12}, but are more widely applicable for temporal co-authorship networks~\cite{gdlmt-svwua-20,hw-swp-24}, collaboration in software development~\cite{om-ses-10}, or rolling stock scheduling~\cite{dhnw-ows-25}. 
A storyline depicts time on the x-axis, shows a set of characters as x-monotone curves, and interactions (or \emph{meetings}) as vertical character groupings at the corresponding point in time (see \cref{fig:example}). 
Combinatorially, the primary degree of freedom in drawing a storyline is to determine valid vertical character orderings, where the characters in each meeting must form a contiguous interval, at all of the discrete meeting times. 
The predominant optimization goal in storylines is crossing minimization~\cite{knpss-mcsv-15,gjlm-cmsv-16,djjmmn-rmecmsd-24}; its \NP-hardness, even for only $\tau=2$ time steps, follows from that of bipartite crossing number~\cite{gj-cnn-83}; yet, other aspects, such as wiggle and whitespace minimization, have also been considered~\cite{dhnw-ows-25,trlcyw-iechs-18,thm-efgsvfsd-15}. 
While most existing approaches target the minimization of the total number of crossings, we are also interested in minimizing the maximum number of (local) crossings per character. 
This is motivated by the aim to balance the required crossings (and consequently the visual quality) among the different characters~\cite{plmnd-fmfmoswff-26a} and is related to the established concepts of $k$-planarity and $k$-plane graph drawings, in which each edge may have at most $k$ crossings~\cite{ht-bpg-20,dlm-sgdbp-18}. 

In this paper, we study both global and local crossing minimization in storylines as a \emph{drawing extension problem}~\cite{adfjkp-tppeg-15,ars-epod-21,bgkmn-eopgdft-23,dfgn-pcesl-24,ganian_et_al:LIPIcs.ICALP.2021.72,eghkn-ep1d-20}, i.e., we are given a (partial) storyline layout as well as a set of missing characters. The task is to optimally insert the missing characters into the partial layout with respect to (1)~the total number of resulting crossings or (2)~the local crossing number of both the old and the new characters. We call these problems \gs (\gss) and \ls (\lss), respectively. We take a parameterized complexity perspective and present several results, most of them tight, providing an almost complete picture of the complexity of \gss and \lss.

\begin{figure}[tb]
    \centering
    \includegraphics[]{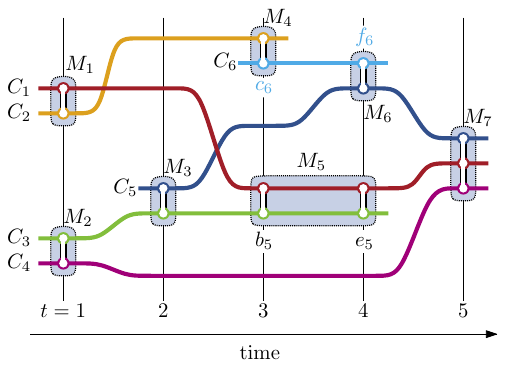}
    \caption{Sketch of a storyline layout with six characters, seven meetings, and five time instants.}
    \label{fig:example}
\end{figure}

\subparagraph{Preliminaries.}
A \emph{storyline} instance $\S = (\mathcal{C},\mathcal{M})$ consists of a set $\mathcal{C}=\{C_1, C_2, \dots, C_n\}$ of $n$ \emph{characters} and a set $\mathcal{M} = \{M_1, M_2, \dots, M_m\}$ of $m$ \emph{meetings}. 
Each meeting $M_i \in \mathcal{M}$ is a triple $\langle \mathcal{C}(M_i), b_i, e_i\rangle$, where $\mathcal{C}(M_i) \subseteq \mathcal{C}$ is a subset of characters, $b_i \in \mathbb{N}$ is the \emph{begin-time} of $M_i$ and $e_i \in \mathbb{N}$ is the \emph{end-time} of $M_i$ (thus, $b_i \leq e_i$). 
We say that a meeting $M_i$ is \emph{active} at any time $t$ with $b_i \le t \le e_i$, and that each character $C_j \in \mathcal{C}(M_i)$ \emph{participates} in $M_i$. 
A common assumption is that a character cannot participate in two distinct meetings at the same time instant. 
For each character $C_j \in \mathcal C$ we define its \emph{lifespan} $L_j=[c_j,f_j]$ as the interval between the begin-time $b_i$ of the first meeting $M_i$ that $C_j$ participates in and the end-time $e_l$ of the last meeting $M_l$ it participates in. 
For any time instant in its lifespan, we say a character is \emph{active}.
Let $A_t$ be the set of active characters at time $t$ and let $\sigma_t = |A_t|$ be their number.
By convention, we set to $1$ the minimum $c_j$ over all characters $C_j$, and we denote by $\tau$ the  largest $f_j$ over all characters $C_j$ (i.e.,  $\S$ is defined over $\tau$ time instants).

In a \emph{storyline layout} $\Gamma$ (see \cref{fig:example}), each character $C_j$ is represented as an x-monotone curve %
from time instant $c_j$ to time instant $f_j$. 
For each time instant $t \in [\tau] = \{1, 2, \dots, \tau\}$ the active character curves in $A_t$ are ordered vertically as a permutation $\pi_t$.
Two basic combinatorial properties must be satisfied: %
\begin{enumerate*}[label=(\roman*)]
\item For each meeting $M_i$, the curves representing the characters in $\mathcal{C}(M_i)$ are \emph{consecutive} in the permutations $\pi_j$ for all time instants $j \in [b_i,e_i]$, i.e., they form a contiguous group between $b_i$ and $e_i$; 
\item for meetings $M_i$ with a duration $e_i - b_i >0$ the characters $\mathcal C(M_i)$ cannot be crossed while $M_i$ is active.
\end{enumerate*}
For any two consecutive time instants $i$ and $i+1$, two character curves \emph{cross} if and only if they have a different relative order in $\pi_i$ and $\pi_{i+1}$. %
The \emph{local crossing number} of a storyline layout is the maximum number of crossings along any single character curve.

We are interested in extension problems. Let  $\S = (\mathcal C, \mathcal{M})$ be a storyline instance. A \emph{sub-storyline of $\S$} is a storyline instance $\S' = (\mathcal C' \subseteq \mathcal C,  \mathcal{M'})$ such that if a meeting in $\S$  contains at least one character of $\S'$, then such a meeting exists also in $\S'$ (even if incomplete).  More formally, for each meeting $M_i = \langle \mathcal{C}(M_i) , b_i, e_i \rangle\in \mathcal{M}$  such that $\mathcal{C}(M_i) \cap \mathcal{C'} \neq \emptyset$,  there is a meeting $M'_i \in \mathcal{M}'$ such that $M'_i=\langle \mathcal{C}(M_i) \cap \mathcal{C'}, b_i, e_i \rangle$. For a storyline layout $\Gamma$ of $\S$ we denote the induced \emph{sub-storyline layout} $\Gamma[\S']$ as the layout $\Gamma$ restricted to the characters and meetings in $\S'$. We focus on the following two decision problems:

\defquestion{(\textsc{Local}) \gs (\textsc{(L)}\gss)}{A  storyline instance $S = (C, \mathcal{M})$, an integer $\chi$, a storyline layout $\Gamma'$ of a sub-storyline $S' = (C' \subseteq C,  \mathcal{M'})$.}{Does there exist a storyline layout $\Gamma$ of $\S$ such that $\Gamma[\S'] = \Gamma'$ with (local) crossing number at most $\chi$?}

We define the following parameters relevant to quantify the complexity of (L)\gss.
\begin{itemize}
    \item $n = |\mathcal C'|$, number of characters in the sub-storyline $\S'$,
    \item $k = |\mathcal C \setminus \mathcal C'|$, number of new characters to be inserted,
    \item $\tau$, number of time instants of $\S$,
    \item $\mu$, number of meetings that involve new characters in $\mathcal C \setminus \mathcal C'$,
    \item $\sigma$, maximum number of active characters at any time instant.
\end{itemize}

\subparagraph{Contributions.}
In \cref{sec:hardness} we prove that \lss is \W[1]-hard, parameterized by the number $k$ of new characters and the maximum number $\sigma$ of characters per time instant, even if the missing characters participate in only $\mu=2$ meetings. This immediately implies para\NP-hardness of \lss for the parameter $\mu$.
Complementary to this hardness, we provide, in \cref{sec:XP}, parameterized algorithms for the parameters $\sigma$ and the (local or global) number $\chi$ of crossings of the extended storyline layout, based on dynamic programming (DP). More precisely, we show that \lss is in \XP\ parameterized by $k$ or $\sigma$ and in \FPT\ parameterized by $\sigma + \chi$. For \gss, these algorithms can be adjusted to run in \FPT-time for the parameter $\sigma$ and in \XP-time for $k$. Further, we can adapt that DP-algorithm to show that \gss is in \FPT~for $\mu + \chi$ and, hence, in \XP\ for $\mu$ -- in contrast to \lss. Lastly, in \cref{sec:onec}, we provide a polynomial-time algorithm for \gss if we are inserting a single new character.

\section{Hardness of \ls}\label{sec:hardness}

\begin{restatable}[th:lse-k]{theorem}{thLSEk}
    \label{th:lse-k}
    \lss parameterized by $k+\sigma$ is \W[1]-hard, even for $\mu=2$.
\end{restatable}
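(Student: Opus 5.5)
We plan a parameterized reduction from \textsc{Multicolored Clique} (MCC), which is \W[1]-hard when parameterized by the number $\ell$ of colour classes. Let $G=(V_1\cup\dots\cup V_\ell,E)$ be an instance, and assume without loss of generality that every class has the same size $N$. For each colour $i$ we create one new character $X_i$, so $k=\ell$. Each $X_i$ participates in exactly two meetings, a \emph{start} meeting and an \emph{end} meeting, so $\mu=2$. The start meeting fixes $X_i$ at the left border and the end meeting fixes it at the right border. Between these two meetings $X_i$ can move freely, but it must cross fixed old characters to change its vertical position.

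The fixed layout $\Gamma'$ is a bundle of old characters arranged in $\ell$ horizontal \emph{lanes}, one per colour. Lane $i$ offers $N$ \emph{slots}, namely the gaps between consecutive old characters, and the slot in which $X_i$ travels through a middle section encodes the chosen vertex $v\in V_i$. Every old character participates in meetings at every time instant, which pins down $\Gamma'$ exactly. To keep $\sigma$ bounded by a function of $\ell$, we do not lay out all $N$ slots simultaneously. Instead we use a ``rotating'' frame: the lane contains only $O(1)$ old characters, and we encode the slot choice temporally, via the time window in which $X_i$ is allowed to cross the lane border cheaply. Concretely, lane $i$ contains a gadget whose old characters swap according to a schedule. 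A new character that shifts between the upper and lower side of the lane at time block $j$ pays a fixed cost, whereas deviating later costs extra crossings. The old characters' budget then forces consistency.

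The local crossing bound $\chi$ is then tuned as follows. Each old character in a lane already carries many forced crossings from the schedule, and this is exactly $\chi$ minus a slack. The slack is set so that an old guard character can tolerate only a limited number of extra crossings with new characters. Selection consistency is enforced by \emph{edge verification} blocks, one for each pair $i<j$. In the block for $(i,j)$, the characters $X_i$ and $X_j$ must be adjacent, and they can pass through a narrow corridor made of old characters whose remaining slack is exactly enough only when the encoded vertices $u\in V_i$ and $w\in V_j$ satisfy $uw\in E$. For each edge $uw$, the corridor is realized by a time window indexed by the pair. Because $\sigma$ counts characters per time instant, a long time axis is acceptable. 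Each time instant contains $O(\ell^2)$ characters, so $\sigma=O(\ell^2)$ and $k=\ell$.

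For correctness, a clique yields routes for the $X_i$ in which every character stays within $\chi$ crossings. Conversely, the counting argument on the guard characters' slack shows two things. First, each $X_i$ commits to a single vertex: it cannot change lanes mid-way without exceeding some guard's budget. Second, every pair verification passes only through edge windows. The main obstacle is the local-counting gadget: with $\sigma$ bounded, the vertex index must be encoded in time rather than space, and we must prove that $X_i$ cannot ``re-choose'' its vertex between the verification blocks. My first attempt at this would be to charge each switch to a dedicated guard character that has zero slack, so that any change costs a crossing that exceeds $\chi$.
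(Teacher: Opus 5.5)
There is a genuine gap. What you give is a plan whose decisive gadgets are specified only by the behaviour you want from them, and the piece you call ``the main obstacle'' is exactly where it fails as stated.

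With $\mu=2$, a new character is constrained between its two meetings in only two ways: it cannot enter the groups of active old meetings (or cross members of active multi-step meetings), and it must respect the budget $\chi$. Because $\sigma$ is bounded, $X_i$'s vertical position cannot encode one of $N$ vertices. Once $X_i$ has crossed a lane border in time block $j$, nothing in the drawing remembers $j$ except accumulated crossing counts, and your sketch never says which counts record the choice or how a later block reads them.

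So the edge-verification corridor, whose slack is ``exactly enough only when $uw\in E$'', has no access to $u$ and $w$. Its slack is fixed in advance, and the route of $X_i$ and $X_j$ inside block $(i,j)$ is not tied to their earlier choices. Any such link would have to go through crossing budgets, which only give upper bounds. Turning them into the equalities needed for consistency across all $\ell-1$ checks involving $X_i$ requires a global counting or arithmetic argument, and that argument is entirely missing. Your proposed fix, a zero-slack guard, is just an uncrossable wall: it can prevent re-choosing, but it does not make the earlier choice visible later.

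Two bookkeeping points as well. First, $\mu$ counts meetings involving new characters, so your start and end meetings must be shared by all $X_i$; per-character meetings give $\mu=2\ell$. Second, for the same reason, ``$X_i$ and $X_j$ must be adjacent'' in block $(i,j)$ cannot be imposed by a meeting and would have to be forced by routing, which you do not argue.

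The paper sidesteps the memory problem by reducing from \textsc{Exact Unary Bin Packing} parameterized by the number $K$ of bins. Its constraints are already of the kind local crossing budgets express.
\begin{itemize}
\item Each new character is a bin, and all of them share one meeting with a top character at the start and one with a bottom character at the end.
\item Each item $x_j$ becomes a column of $2k-1$ channels bounded by zero-slack characters: one sparse channel of capacity $x_j$, and dense channels of capacity $\Delta x_j$.
\item The central character of a channel has slack exactly its capacity, so each channel is traversed by at most one new character.
\item A new character whose sparse channels sum to $B-\delta$ collects $\chi+\delta(\Delta-1)$ crossings, where $\chi=B+\Delta B(K-1)$, and $\sum_j x_j=KB$ forces $\delta=0$ for every character.
\end{itemize}
The only information carried along the time axis is each new character's accumulated crossing count, which is exactly what the local crossing number bounds. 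To make your clique route work, you would essentially have to embed a clique-to-bin-packing arithmetic encoding, at which point reducing from bin packing directly is the simpler path.
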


In order to prove \cref{th:lse-k}, we exploit a reduction from the \textsc{Unary Bin Packing} problem. Here, we are given a set $I=\{x_1,x_2,\dots,x_n\}$ of positive integers encoded in unary, and two integers $B$ and $K$. The task is to
decide whether $I$ can be partitioned into $K$ disjoint subsets (bins) such that the integers in each bin sum up to at most $B$. This problem has been proven to be W[1]-hard parameterized by the number $K$ of bins~\cite{DBLP:journals/jcss/JansenKMS13}. It is not difficult to see that the same W[1]-hardness holds for \textsc{Exact Unary Bin Packing} (\textsc{EUBP} for short), where the sum of the integers in each bin must be exactly $B$ (see, e.g.,~\cite{DBLP:journals/algorithmica/BlazejJRS25}). A necessary condition for an instance of \textsc{EUBP} to admit a solution is that $\sum_{j=0}^n x_j=K \cdot B$.

\subparagraph{Overview.} The idea is to transform an instance $\langle I, K, B \rangle$  of \textsc{EUBP} into an instance $\langle \S, \chi, \Gamma', \S' \rangle$ of \lss as follows. The full storyline instance $\S$ and its sub-storyline $\S'$ differ by exactly $k=K$ characters. In order to extend the layout $\Gamma'$ of $\S'$, the $k$ new characters are forced to go through specific regions in $\Gamma'$ in which they collect a controlled number of crossings. In particular, some special regions will force the new characters to collect a number of crossings that corresponds to an integer of $I$. Crucially, the structure of $\Gamma'$ is such that, if a valid extension $\Gamma$ exists, then the number of crossings collected by a single character over all special regions is exactly $B$. Moreover, no two characters can go through the same regions, which implies a partition of the integers (encoded by crossings) into bins (encoded by characters) such that the integers in each bin sum up to precisely $B$.
We now proceed by describing the gadgets  used to construct the instance $\langle \S, \chi, \Gamma', \S' \rangle$  of \lss. These gadgets should be intended as parts of $\Gamma'$ (and $\S'$) that will be eventually assembled together. 

\begin{figure}[h]
    \centering
    \includegraphics[page=2]{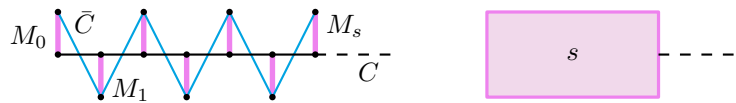}
    \caption{An $s$-saturator (left) and its schematic representation (right).}
    \label{fig:saturator}
\end{figure}

\subparagraph{Saturator gadgets.} An \emph{$s$-saturator gadget}, for any integer $s$, consists of two characters $C$ and $\bar{C}$, and $s+1$ meetings $M_0,M_1,\dots,M_{s}$, such that for each $i \in [s]$: $\mathcal{C}(M_i)=\{C,\bar{C}\}$, and $b_i=e_i=b^*+i$, where $b^*$ denotes the initial time instant when the gadget appears in the storyline. Also, in the corresponding layout in $\Gamma'$, shown in \cref{fig:saturator}, $C$ and $\bar{C}$ cross $s$ times. One of the two characters, say $C$, is called the \emph{leading} character of the gadget, and it will participate in additional meetings beyond the gadget itself, receiving at most $\chi-s$ additional crossings. 
Instead, the curve representing $\bar{C}$ starts and ends within the gadget.

\begin{figure}[h]
    \centering
    \includegraphics{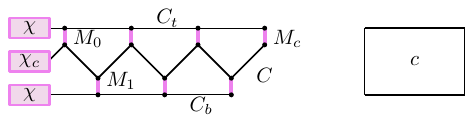}
    \caption{A $c$-channel (left) and its schematic representation (right).}
    \label{fig:channel}
\end{figure}

\subparagraph{Channel gadgets.} A \emph{$c$-channel gadget}, for a given integer $c$ called \emph{capacity}, consists of three characters $C$, $C_t$, $C_b$, and $c+1$ meetings $M_0,M_1,\dots,M_{c}$, such that for each $i \in [c]$: $\mathcal{C}(M_i)=\{C,C_t\}$ if $i$ is even or $\mathcal{C}(M_i)=\{C,C_b\}$ if $i$ is odd, and $b_i=e_i=b^*+i$, where $b^*$ denotes the initial time instant when the gadget appears in the storyline.  We call $C$ the \emph{central character} of the gadget, while $C_t$ and $C_b$ are \emph{boundary characters}. In addition, $C_t$ and $C_b$ are the leading characters of one $\chi$-saturator gadget each, while $C$ is the leading character of a $\chi_c$-saturator gadget, with $\chi_c = \chi-c$; see \cref{fig:channel}.
Intuitively, the gadget represents a region whose traversal implies crossing $c$ times the central character.%

\begin{figure}[h]
    \centering
    \includegraphics{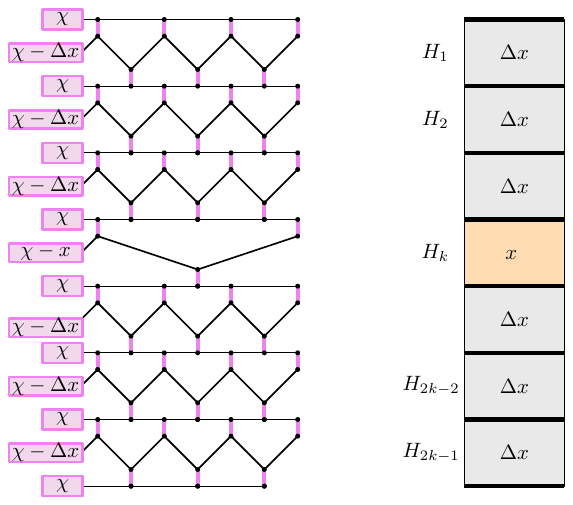}
    \caption{An $x$-column gadget for $x=2$ and $\Delta=3$ (left) and its schematic representation (right). The sparse channel has a light orange background, the dense channels are grey.}
    \label{fig:column}
\end{figure}

\subparagraph{Column gadgets.} An \emph{$x$-column gadget}, for an integer $x$,  consists of $2k-1$ channel gadgets $H_1,H_2,H_{2k-1}$ such that one boundary character of $H_i$ coincides with one boundary character of $H_{i+1}$, for $i \in [2k-2]$. Moreover, for each $H_i$, the capacity is equal to $\Delta \cdot x$ if $i \neq k$, and to $x$ if $i=k$. (We need $\Delta>1$, the exact value will be defined later.) Indeed, we call the channel $H_k$ the \emph{sparse channel} of the gadget, while all other channels are  \emph{dense}.  The corresponding layout is shown in \cref{fig:column}. %
Intuitively, each of the $k$ new characters will be forced to traverse a distinct channel in the column gadget.

\begin{figure}[h]
    \centering
    \includegraphics[scale=0.9]{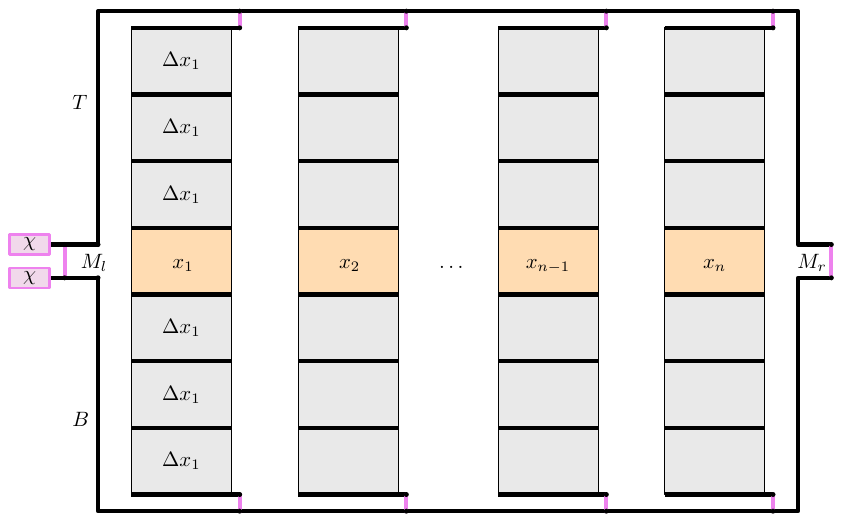}
    \caption{A schematic representation of $\Gamma'$.}
    \label{fig:all}
\end{figure}

\subparagraph{Putting all together.} Consider now $n$ column gadgets $X_1, X_2, \dots, X_n$, such that the capacity of the sparse channel of $X_i$ is equal to the integer $x_i \in I$, for each $i \in [n]$. Let $T$ and $B$ be two further characters that participate in two meetings $M_l$ and $M_r$, such that: $M_l$ occurs at the first time instant of $\S'$ (except for two saturator gadgets defined later), while $M_r$ occurs at the last time instant; $T$ and $B$ are the only two characters involved in these two meetings. 
Moreover, $T$ meets with one boundary character of the channel $H_1$ of the column $X_i$, for each $i \in [n]$. Analogously, $B$ meets with one boundary character of the channel $H_{2k-1}$ of the column $X_i$, for each $i \in [n]$. Also,  $T$ and $B$ are the leading characters of two $\chi$-saturators. This concludes the construction of $\Gamma'$ and hence of $\S'$; see also \cref{fig:all}. 

The following two properties easily follow from the described construction.

\begin{property}\label{pr:central}
   Every character of~$\S'$ is crossed $\chi$ times in $\Gamma'$, except for the central characters in the the channel gadgets, and for the non-leading characters in the saturator gadgets. 
\end{property}

Based on \cref{pr:central}, we say that a character $N$ \emph{traverses} a $c$-channel gadget $H$ in a layout, if the curve representing $N$ traverses $H$ monotonically from left to right, such that the central character of $H$ is crossed exactly $c$ times by $N$. 

\begin{property}\label{pr:channel}
   In any solution of \lss, a channel can be traversed by at most one character.
\end{property}

\subparagraph{Construction of $\S$.} It remains to construct $\S$ from $\S'$, which implies introducing $k$ new characters, which we denote by $N_1, N_2, \dots, N_k$, and the corresponding meetings, see \cref{fig:ex}. In particular, we add two new meetings. One meeting at the time instant right after the one of $M_l$, where all these new characters meet together with $T$; the second meeting at the time instant just before $M_r$, where all these new characters meet together with $B$.
Finally, we fix the values of $\Delta$ and $\chi$ as $\Delta = 2$ and  $\chi = B+\Delta\cdot B \cdot (K-1)$.

\smallskip This concludes the reduction; we note that the constructed storyline instance $\S$ is such that: (1) $k=K$ and $\mu=2$; (2) $\sigma_i \le 9k$, for each $i \in [\tau]$; (3) it has size $O(n \cdot k)$. An example of a solution of \lss constructed from a solution of \textsc{EUBP} can be seen in \cref{fig:ex}. The full proof of \cref{th:lse-k} is in \cref{apx:hardness}.

The constructed storyline instance $\S$ is such that: (1) $k=K$ and $\mu=2$; (2) $\sigma_i \le 9k$, for each $i \in [\tau]$; (3) it has size $O(n \cdot k)$.

\begin{figure}
    \centering
    \includegraphics{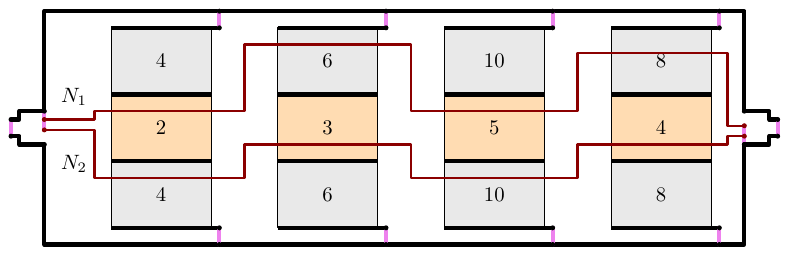}
    \caption{A solution $\Gamma$ of \lss constructed from a solution $\ell$ of \textsc{EUBP}. The instance of \textsc{EUBP} is $I=\{2,3,5,4\}$, $K=2$, and $B=7$. The main parameters of  $\Gamma$ are $k=2$, $\chi = 7+2\cdot7=21$.}
    \label{fig:ex}
\end{figure}

\section{Parameterized algorithms for \gss and \lss}\label{sec:XP}

In this section, we present algorithms for \gss and \lss parameterized by $k$, $\sigma$, $\chi$, and $\mu$.

\subsection{Parameterization in $k$, $\sigma$, and $\chi$}

We present an algorithm for \ls that shows membership in \XP\  parameterized by $k$ or $\sigma$ and in \FPT\  parameterized by $\sigma+\chi$. It can be adjusted to solve \gs in \FPT\ time  parameterized by $\sigma$ and in \XP\ time  parameterized by $k$.

Let $\langle \S, \chi, \Gamma', \S'\rangle$ be an instance of \lss. We use dynamic programming.

\begin{figure}
    \centering
    \includegraphics[page=1]{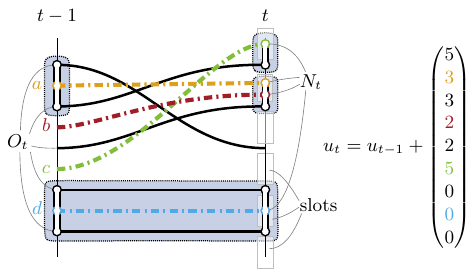}
    \caption{Illustration for the dynamic program. New characters are drawn dash-dotted.} %
    \label{fig:dp}
\end{figure}

\subparagraph{The DP states.}
Consider the discrete time instants $t=1,\ldots,\tau$. %
Let $O_t=A_t \cap \mathcal C'$ be the set of \emph{existing} active characters at time $t$ and let $N_t=A_t\setminus O_t$ be the set of \emph{new} active characters at $t$.
In $\Gamma'$, the characters in $O_t$ appear in some fixed vertical order $\pi_t$, 
and partition the vertical line at time $t$ into $|O_t|+1\le \sigma_t+1\le\sigma+1$ \emph{slots}; see \cref{fig:dp}.

A \emph{placement} of the new characters at time $t$ must specify which slot each new character occupies, as well as the relative order of the new characters within the same slot. 
Accordingly, we define a placement $P_t$ as a pair $(s_t,\prec_t)$, where $s_t \colon N_t\to\{0,1,\ldots,|O_t|\}$ assigns each new character to a slot, and $\prec_t$ is a permutation of $N_t$ that aligns with the slot assignment, i.e., every character in slot $i$ comes before every character in slot $j>i$. From $(s_t,\prec_t)$, we can obtain a single total order $\rho_t$ of $A_t$ for all characters at time $t$.

We process the time instants from left to right. We define a DP state at time $t$ as a pair $(P_t, u_t)$, where
\begin{enumerate*}[label=(\roman*)]
    \item $P_t$ is a placement of the new characters at time $t$, and
    \item $u_t$ is a vector that assigns to every character $C\in O_t$ the number $u_t(C)$ of crossings involving $C$ that occur strictly before time $t$.
\end{enumerate*}
The state $(P_t,u_t)$ is \emph{feasible} if there exists a partial extension of $\Gamma'$ up to time $t$ that realizes the placement $P_t$ at time $t$, respects all meeting contiguity constraints for the meetings up to time $t$, and every character $C\in A_t$ has $u_t(C)\le\chi$ crossings.

\subparagraph{Filling the table.}
At time $t=1$, we enumerate all placements $P_1$ of $N_1$ that are consistent with the meeting constraints, and set $u_1$ to the all-zero vector. At the final time $t=\tau$, we accept the instance if there exists any feasible state $(P_{\tau},u_{\tau})$.%

We now describe how to move from time $t-1$ to time $t$. Consider a feasible state $(P_{t-1},u_{t-1})$. We enumerate all possible placements $P_t$ of the active new characters $N_t$ that 
\begin{enumerate*}[label=(\roman*)]
    \item satisfy all meeting constraints at time $t$, and
    \item do not have any character cross an ongoing meeting between time $t-1$ and $t$ that it does not participate in.
\end{enumerate*}

For every character $C\in A_{t-1}\cap A_t$, we can compute the number $cr_t(C)$ of crossings involving $C$ that occur in the open strip $(t-1,t)$ from $(P_{t-1},P_t)$: this is exactly the number of characters whose relative vertical order with $C$ differs between the two total orders. Notice that this crossing count can optionally consider crossings in a weighted fashion; this will be needed in the proof of Theorem~\ref{th:se-mu}. 

We define the updated budget vector $u_t(C)$ for every character $C\in A_t$ as follows. If $C\in A_{t-1}$, then $u_t(C)=u_{t-1}(C)+cr_t(C)$; otherwise, the character $C$ starts at time $t$, and we set $u_t(C)=0$. If $u_t(C)\le\chi$ for every $C\in A_t$, then we store $(P_t,u_t)$ as a feasible state. All such transitions generate the set of feasible DP states at time $t$.

\begin{restatable}[th:lse-s]{theorem}{thLseS}
    \label{th:lse-s}
    \lss can be solved in $\tau\cdot (\sigma+\chi)^{O({\min\{k,\sigma\})}}$ time.
\end{restatable}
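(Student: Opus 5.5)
We prove \cref{th:lse-s} by establishing correctness of the dynamic program described above and then bounding the number of DP states per time instant and the cost of one transition.

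\textbf{Correctness.} By induction on $t$ we show that $(P_t,u_t)$ is stored if and only if there is a partial extension of $\Gamma'$ on $[1,t]$ that
\begin{enumerate*}[label=(\alph*)]
\item induces the total order $\rho_t$ at time $t$,
\item satisfies all contiguity and no-crossing constraints of meetings active up to $t$, and
\item gives every character the crossing count recorded in $u_t$, with all counts at most $\chi$.
\end{enumerate*}

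The key observation is that a storyline layout is determined combinatorially by its sequence of orders $\rho_1,\dots,\rho_\tau$. Crossings in the strip $(t-1,t)$ depend only on $\rho_{t-1}$ and $\rho_t$. Likewise, the meeting constraints at time $t$ and the constraint of not crossing an ongoing meeting depend only on $\rho_{t-1}$ and $\rho_t$. Hence the feasibility of the future depends on the past only through $\rho_t$ and the accumulated crossing counts.

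Crossings of a new character $N$ are also accumulated, so $u_t$ must be defined on all of $A_t$, including $N_t$. This matches the update rule, which ranges over all $C\in A_t$, and is needed because the local crossing number bound applies to new characters as well.

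Since every crossing involves two characters and occurs in exactly one strip, summing $cr_t$ over all strips counts each character's crossings exactly. The instance is therefore a yes-instance if and only if some state at time $\tau$ is feasible. Both directions of the induction step are routine: a stored transition can be appended to a witness extension, and restricting a global solution to $[1,t]$ yields a stored state.

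\textbf{Counting states.} This step is the main obstacle; it is where the bound $(\sigma+\chi)^{O(\min\{k,\sigma\})}$ must come from. The counter vector naively has $\sigma$ entries in $\{0,\dots,\chi\}$, giving $(\chi+1)^\sigma$ possibilities, so we need a finer argument.

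\emph{Placements.} A placement is an injective-like assignment of $|N_t|\le\min\{k,\sigma\}$ new characters into at most $\sigma+1$ slots, together with an order inside the slots. The number of placements is at most $(\sigma+1)^{|N_t|}\cdot|N_t|!\le(2\sigma)^{O(\min\{k,\sigma\})}$.

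\emph{Counters.} We split $u_t$ into two parts.
\begin{itemize}
\item The existing characters $C\in O_t$. Their crossings with other existing characters are fixed by $\Gamma'$ and can be precomputed. Only their crossings with new characters must be tracked.
\item The new characters. For these we store the full counter, which gives at most $(\chi+1)^{k}$ options. Crossings among new characters can be charged here too.
\end{itemize}

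To keep the number of existing-character counters small, we plan a two-case argument.
\begin{itemize}
\item If $\sigma\le k$, we simply store all $\le\sigma$ counters, giving $(\chi+1)^{\sigma}$ options.
\item If $k<\sigma$, we argue as follows. The surplus a new character contributes to an existing character $C$ between two consecutive time instants is determined by the slot movements of the new characters. Each new character accumulates at most $\chi$ crossings in total, so the total surplus added to existing characters is at most $k\chi$. The vector of surpluses is thus a vector of total mass at most $k\chi$, and it is determined by the history only through which slots the new characters passed. We would try to encode it by the multiset of (existing character, count) pairs, of which there are $(\sigma+\chi)^{O(k)}$. If that encoding fails, we fall back to noting that the surplus vector has support at most $k\chi$, giving $\binom{\sigma}{\le k\chi}(\chi+1)^{k\chi}$ options. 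This is still $f(\chi)\cdot\sigma^{O(k\chi)}$, which is adequate at least for the $\FPT$ claim in $\sigma+\chi$.
\end{itemize}

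\emph{Stated bound.} Combining the placement count with either counter bound gives at most $(\sigma+\chi)^{O(\min\{k,\sigma\})}$ states per time instant, which is what the statement requires (possibly after refining the surplus encoding).

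\textbf{Running time.} Each transition examines a pair of states and computes the crossing counts and the constraint checks in $\mathrm{poly}(\sigma)$ time. Using a set or hash structure to deduplicate states, the work per time instant is the square of the state count times $\mathrm{poly}(\sigma)$, which is still $(\sigma+\chi)^{O(\min\{k,\sigma\})}$. Multiplying over the $\tau$ time instants yields the claimed running time.
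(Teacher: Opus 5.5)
Your correctness argument is the same induction the paper uses. You are also right that $u_t$ must carry counters for the new characters, which the paper's definition over $O_t$ leaves out. The gap is exactly where you locate it: the number of budget vectors when $k<\sigma$. Neither of your encodings closes it.
\begin{itemize}
\item A multiset of (existing character, count) pairs has one pair for each character in the support of the surplus vector. That support is bounded only by $\min\{\sigma,k\chi\}$, not by $O(k)$.
\item The fallback $\binom{\sigma}{\le k\chi}(\chi+1)^{k\chi}$ has $\chi$ in the exponent.
\end{itemize}
So the sentence claiming that either counter bound gives $(\sigma+\chi)^{O(\min\{k,\sigma\})}$ is false. What your argument actually proves is $\tau\cdot(\sigma+\chi)^{O(\sigma)}$, which matches the statement only when $\sigma\le k$.

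A sharper count cannot repair this, because the DP really has that many states. Take $k=1$ and $m$ gadgets on pairwise disjoint time intervals. Gadget $j$ consists of:
\begin{itemize}
\item existing characters $Q_j,R_j,Q'_j$, stacked in this order and active throughout;
\item a meeting $\{Q_j,R_j,Q'_j\}$ lasting at least two time instants;
\item meetings $\{N,R_j\}$ immediately before and after that meeting.
\end{itemize}
The new character $N$ must bypass each block either above or below. Going above charges $Q_j$ but not $Q'_j$; going below charges $Q'_j$ but not $Q_j$. With $\sigma=3m+1$ and $\chi=\Theta(m)$, all $2^m$ combinations are feasible and can be made to end in the same placement. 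Their budget vectors are pairwise incomparable, so even with dominance pruning the DP has $2^{\Omega(\sigma)}$ states, whereas $(\sigma+\chi)^{O(k)}$ is polynomial in $m$.

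The paper's proof does not supply the missing step either. It bounds the number of budget vectors by $(\chi+1)^{|N_t|}$, i.e., it counts entries only for the new characters, and the example shows this is invalid for the DP as defined. Hence neither argument establishes the $\min\{k,\sigma\}$ exponent, and with it membership in XP for $k$; that would need a genuinely different idea. The \gss corollary is unaffected, since there only one total count per placement is stored.
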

\begin{proof}[Sketch]
The number of possible placements at time $t$ is at most $(\sigma+1)^{|N_t|}\cdot |N_t|!\in\sigma^{O(\min\{k,\sigma\})}$, and for each placement the budget vector contributes a factor of $(\chi+1)^{O(\min\{k,\sigma\})}$, giving $(\sigma+\chi)^{O(\min\{k,\sigma\})}$ states per layer. 
A state $(P_{t-1},u_{t-1})$ transitions to $(P_t,u_t)$ if $P_t$ satisfies all meeting-contiguity constraints at $t$, no character crosses an ongoing meeting it does not participate in, and the updated budgets $u_t(C)=u_{t-1}(C)+cr_t(C)$ remain within $\chi$; both checks take $O(\sigma^2)$ time. 
A straightforward induction shows that a state is marked feasible if and only if a valid partial extension realizing it exists.
Accepting whenever any feasible state exists at $t=\tau$ and multiplying the per-layer cost by $\tau$ yields the claimed bound.
\end{proof}

Note that the above algorithm can be easily adjusted to work for \gss. Namely, for each slot assignment, we only need to store the minimum number of crossings it requires. 

\begin{corollary}
    \gss can be solved in $\tau\cdot \sigma^{O(\min\{k,\sigma\})}$ time.
\end{corollary}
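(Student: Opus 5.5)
We reuse the layered dynamic program from the proof of \cref{th:lse-s} and change only what a state stores. The budget vector $u_t$ is dropped. For each placement $P_t$ we keep one integer $D_t(P_t)$: the minimum total number of crossings over all partial extensions of $\Gamma'$ up to time $t$ that realize $P_t$ at time $t$ and respect all meeting constraints up to $t$. If no such extension exists, we set $D_t(P_t)=\infty$.

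At $t=1$ we set $D_1(P_1)=0$ for every placement $P_1$ of $N_1$ that satisfies the meeting constraints at time $1$. For the step from $t-1$ to $t$, call a pair $(P_{t-1},P_t)$ admissible if $P_t$ satisfies the meeting-contiguity constraints at $t$ and no character crosses an ongoing meeting it does not participate in. For an admissible pair, let $cr(P_{t-1},P_t)$ be the number of pairs of characters in $A_{t-1}\cap A_t$ whose relative order differs between $\rho_{t-1}$ and $\rho_t$. This counts each crossing once, not once per endpoint, and it can be computed in $O(\sigma^2)$ time. It also includes the crossings between two existing characters, which are fixed by $\Gamma'$ and therefore only add a constant. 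The recurrence is
\[
D_t(P_t)=\min\{D_{t-1}(P_{t-1})+cr(P_{t-1},P_t) : (P_{t-1},P_t)\text{ admissible}\},
\]
and we accept if and only if $\min_{P_\tau}D_\tau(P_\tau)\le\chi$.

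Correctness rests on one observation. The total number of crossings is a sum of per-strip contributions, and each contribution depends only on $\rho_{t-1}$ and $\rho_t$, which are determined by $(P_{t-1},P_t)$ together with the fixed orders $\pi_{t-1},\pi_t$. The feasibility of the continuation after time $t$ depends only on $P_t$, since the constraints at later times involve only the current order. So a standard exchange argument applies: if an optimal extension passes through $P_t$, replacing its prefix by a cheapest prefix ending in $P_t$ gives a valid extension that is no more expensive. An induction over $t$ then shows that $D_t(P_t)$ equals the claimed minimum.

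For the running time, the proof of \cref{th:lse-s} already bounds the number of placements per layer by $(\sigma+1)^{|N_t|}|N_t|!\in\sigma^{O(\min\{k,\sigma\})}$, since $|N_t|\le\min\{k,\sigma\}$. Each layer considers all pairs of placements, which is $\sigma^{O(\min\{k,\sigma\})}$ pairs at $O(\sigma^2)$ time each. Summing over the $\tau$ layers gives $\tau\cdot\sigma^{O(\min\{k,\sigma\})}$. The main point to verify is the claim that local budgets are unnecessary for the global objective, i.e., that the exchange argument holds. This follows because the objective is additive across strips and depends only on consecutive placements.
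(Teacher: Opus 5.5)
Your proposal is correct and follows the paper's own argument: the paper also drops the budget vector from the DP of \cref{th:lse-s}, stores for each placement only the minimum number of crossings needed to reach it, and reuses the same bound of $\sigma^{O(\min\{k,\sigma\})}$ placements per layer. Your exchange argument spells out what the paper leaves implicit, namely that crossings add up over strips and that all constraints depend only on consecutive placements.
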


\subsection{Parameterization in $\mu$}

Note that, by \cref{th:lse-k}, \ls is para-\NP-hard parameterized by $\mu$. In contrast, we can prove that \gs is \XP~under this parameter. In particular, we show that the problem is \FPT~in $\mu+\chi$.

\begin{restatable}[th:se-mu]{theorem}{lemSeMu}
    \label{th:se-mu}
    \gss can be solved in $\tau\cdot (\mu+\chi)^{O(\mu)}$ time.
\end{restatable}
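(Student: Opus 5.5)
The plan is to reduce to the dynamic program of \cref{th:lse-s}, run on a compressed instance whose number of active characters per time instant is bounded by a function of $\mu$ and $\chi$.

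\textbf{Step 1: restrict to the lifespan of the new characters.} Every new character participates in at least one of the $\mu$ meetings, so $k\le \mu\cdot\max|\mathcal C(M)|$. More to the point, each new character's lifespan is a union of intervals delimited by begin- and end-times of these meetings. Outside the time instants spanned by new characters, $\Gamma'$ is fixed, and its crossings are a constant we subtract from $\chi$. If this constant already exceeds $\chi$, we reject.

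\textbf{Step 2: bound the number of new characters by $O(\mu)$.} New characters that participate in exactly the same set of meetings are interchangeable. In an optimal solution they can be drawn as a bundle: take the cheapest member's curve and route all the others parallel to it. Each crossing between an old curve and the bundle then counts once per member, while members of the bundle never cross each other. Since there are at most $2^\mu$ participation patterns, I would rather merge along meetings. Each meeting can contain only a contiguous set of new characters at any time, so we group new characters by their "type" and replace each group with a single weighted character of weight equal to its size. This is where the weighted crossing count mentioned in the DP description comes in.

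For the XP/FPT bound in $\mu$ one wants $O(\mu)$ weighted new characters. Because lifespans are intervals formed from meeting times, there are $O(\mu)$ maximal time segments. Within a segment the set of active new characters is constant, and characters sharing the same segment pattern can be bundled.

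\textbf{Step 3: bound the relevant old characters by $O(\chi)$.} At any time, the slot of a new character matters only relative to the old characters it may cross. A solution with at most $\chi$ crossings moves each new character at most $\chi$ slots in total. So between consecutive meeting times the placement is determined by an offset from a reference slot given by the meetings, and we can restrict slots to a window of $O(\chi)$ positions around the meeting's position in $\pi_t$. This yields at most $(\mu+\chi)^{O(\mu)}$ placements per time instant.

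\textbf{Step 4: run the DP.} With these placements, run the DP of \cref{th:lse-s}, storing only the minimum total weighted crossing count per placement. This gives $\tau\cdot(\mu+\chi)^{O(\mu)}$ time.

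The main obstacle is Step 2 together with Step 3: showing that bundling identical-pattern new characters is without loss of optimality, and that placements can be confined to an $O(\chi)$ window. The bundling argument is an exchange argument, rerouting every member of a class along its cheapest member's curve, which I expect to be delicate. The window argument should follow from each crossing shifting relative position by one.
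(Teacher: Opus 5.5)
Your bundling of new characters with \emph{identical} meeting sets is essentially the paper's exchange argument, and it is sound: reroute the whole class along its member with the fewest crossings to characters outside the class. The gap is in the steps you add to reach $O(\mu)$ weighted characters, which the bound $(\mu+\chi)^{O(\mu)}$ needs.

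Bundling by segment pattern, i.e.\ by lifespan, is not lossless. Let two new characters both attend $M_1$ at time $1$ and $M_3$ at time $3$. At time $2$, one meets an old character $a$ and the other meets an old character $c$, with further old characters between $a$ and $c$ in $\pi_2$. They have the same lifespan but must be apart at time $2$, so no single weighted curve represents both.

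Bundling only by identical participation does not give $O(\mu)$ classes either. Take one meeting $M_t$ at each time $t\in[\mu]$ and, for every $i\le j$, a new character attending exactly $M_i,\dots,M_j$. Then about $\mu^2/4$ pairwise inequivalent new characters are active at time $\lceil\mu/2\rceil$, although the instance needs no crossings at all. The DP of the previous section then enumerates $(\Theta(\mu^2))!=\mu^{\Theta(\mu^2)}$ orders per time instant, even for $\chi=0$.

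Step 3 fails as well, because slot indices change without any crossing whenever old characters begin or end. A new curve can lie on either side of an old curve that starts, at no cost. Suppose $N$ meets $a$ at times $1$ and $10$, and old characters $c_1,\dots,c_m$ live on $[2,9]$ directly below $a$. Then at times $2,\dots,9$, $N$ can use any of the $m+2$ gaps around $a,c_1,\dots,c_m$ with zero crossings. So the crossing-free placements alone are not confined to an $O(\chi)$ window around the meeting position; there can be $\Theta(\sigma)$ of them. Restricting to a window would require a dominance argument that you do not give.

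For comparison, the paper's sketch consists of the identical-pattern bundling plus the weighted DP, and then simply asserts that $\sigma$ can be replaced by $\mu$. Your Steps 2 and 3 try to justify exactly this replacement, but as written neither does, so your plan does not establish the claimed running time.
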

\begin{proof}[Sketch] We partition the set of new characters into $2^\mu$ equivalent classes based on the meeting they participate to. The intuition is that characters in the same class can be drawn as a single bundle, as they participate in the same meetings. Next, we replace the characters in the same class with a single representative and apply the DP algorithm described in the previous section, with the only difference that crossings among representative characters are weighted based on the cardinality of the corresponding classes.
\end{proof}

\section{A polynomial-time algorithm for \gss with one new character}\label{sec:onec}

\begin{figure}
    \centering
    \includegraphics[page=2]{dual}
    \caption{Illustration for the proof of \cref{thm:k1-poly}. Adding a character with lifespan $[2,5]$ that participates in the green meetings $M_3$, $M_6$, and $M_7$. The marked path through the dual corresponds to a solution where the character has one crossings, shown in \cref{fig:example}.}
    \label{fig:dual}
\end{figure}

We now consider the case $k = 1$, i.e., a single new character is inserted into
an existing storyline layout.  We show that this case admits a polynomial-time
algorithm by reducing the insertion problem to a shortest-path computation in a
planar graph derived from $\Gamma'$.

\begin{restatable}[thm:k1-poly]{theorem}{thmKPoly}
\label{thm:k1-poly}
  If $k = 1$, then \gss can be solved in $O(\tau \cdot \sigma + \chi)$ time.
\end{restatable}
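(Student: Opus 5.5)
We reduce the insertion of the single new character $N$ with lifespan $[c,f]$ to a shortest-path computation in a layered graph built from $\Gamma'$. At each time instant $t\in[c,f]$ the existing characters $O_t$ are stacked in the fixed order $\pi_t$ and split the vertical line into $|O_t|+1\le\sigma+1$ slots. A drawing of $N$ is then just a choice of one slot $s_t$ per time instant. This is exactly the placement of the DP in \cref{th:lse-s}, specialised to $|N_t|=1$: the order $\prec_t$ is trivial, and we do not need a budget vector, since for \gss we only minimise the total number of new crossings.

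The crossings among existing characters are fixed by $\Gamma'$. So the objective is the number of crossings involving $N$ plus a constant, and we accept if and only if the minimum total is at most $\chi$.

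We build a directed acyclic graph $D$ whose vertices are the pairs $(t,s)$ with $t\in[c,f]$ and $s$ a slot at $t$.

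\textbf{Admissible slots.} We keep a vertex $(t,s)$ only if it is admissible. If $N$ participates in a meeting $M$ active at $t$, then $s$ must be adjacent to the contiguous block $\mathcal C(M)\cap O_t$, i.e., immediately above or below it. If the block is empty, $N$ must sit so that it does not break other meetings. If $N$ participates in no meeting at $t$, then $s$ must not lie strictly inside the block of any meeting active at $t$.

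\textbf{Edges and weights.} We add an edge $(t-1,s)\to(t,s')$ whenever the move does not cross a character of an ongoing meeting (duration $>0$) that $N$ does not attend, and, symmetrically, does not leave an ongoing meeting of $N$. Its weight is the number of crossings of $N$ in the strip $(t-1,t)$. This number depends only on $s$, $s'$, and the positions of characters entering or leaving at $t$. For an existing character $C$ active at both $t-1$ and $t$, $N$ crosses $C$ exactly when $C$ is below $N$ at one instant and above it at the other. The characters of $O_{t-1}\cap O_t$ keep their relative order except where $\Gamma'$ crosses them, so we count the characters of $O_{t-1}\cap O_t$ lying below slot $s$ at $t-1$ but above slot $s'$ at $t$, and vice versa.

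The optimal extension then corresponds to a shortest path from any admissible $(c,\cdot)$ to any admissible $(f,\cdot)$. Correctness is the $|N_t|=1$ case of the induction behind \cref{th:lse-s}.

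\textbf{Main obstacle: running time.} The naive graph has $\Theta(\sigma^2)$ edges per layer, giving $O(\tau\sigma^2)$. To reach $O(\tau\cdot\sigma+\chi)$ we instead view the arrangement of $\Gamma'$ as a planar subdivision and take its dual, as in \cref{fig:dual}.

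\textbf{Dual graph.} Faces are bounded by consecutive character curves and time instants. Dual edges cross a single character segment (weight $1$) or advance in time within a slot (weight $0$). The first step is to check that the number of faces, and hence the size of the dual, is $O(\tau\sigma)$. The faces are the slots per time step plus faces created by crossings of $\Gamma'$, and I would try to argue that the latter are charged to time steps. Any remaining excess crossings in $\Gamma'$ can be bounded by $\chi$, since if $\Gamma'$ already has more than $\chi$ crossings we reject immediately; this is where the additive $\chi$ term comes from.

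\textbf{Shortest path in $O(\tau\sigma)$.} Weights are $0/1$, so a 0--1 BFS on the dual, with forbidden dual edges removed (those crossing ongoing meetings), gives linear time.

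\textbf{Hardest part.} Verifying that dual paths correspond exactly to valid $x$-monotone insertions that respect meeting contiguity. Each time-advancing step must visit an admissible face at every instant $t$, which should follow from the layered structure of the dual.
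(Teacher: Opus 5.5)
Your final algorithm follows the paper's route. You take the dual of the arrangement of $\Gamma'$ on $[c,f]$, cut by the vertical lines at the time instants, prune it according to the meetings, and run a (0--1) BFS. Your $0/1$ weights are equivalent to the paper's unit lengths minus the constant $f-c+1$.

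\textbf{The gap: your admissibility rule is too strict.} For a meeting $M$ of $N$ active at $t$, you only allow the two slots immediately above or below the block $\mathcal C(M)\cap O_t$. But $N\in\mathcal C(M)$, so placing $N$ strictly between two existing participants also keeps $\mathcal C(M)$ contiguous. Excluding these interior slots can lose every optimal extension. Here is a concrete instance:
\begin{itemize}
\item At $t-1$ the existing order is $A,D,B$, with meeting $\{D,N\}$.
\item At $t$ the order is $A,B,D$, with meeting $\{A,B,N\}$; so $B$ and $D$ cross in $\Gamma'$.
\item $N$ lives on $[t-1,t]$.
\end{itemize}
Put $N$ between $A$ and $D$ at $t-1$ and between $A$ and $B$ at $t$. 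Then $N$ is below $A$, above $D$, and above $B$ at both instants, so it has no crossing at all. Now take the four combinations your rule permits: a slot adjacent to $D$ at $t-1$, and above $A$ or between $B$ and $D$ at $t$. Each gives $N$ at least one crossing, with $A$, $B$, or $D$. So with $\chi=\mathrm{cr}(\Gamma')$ your algorithm would reject a yes-instance.

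The paper avoids this by keeping, during $[b_i,e_i]$, all cells that are \emph{inside or immediately adjacent to} the block. It likewise keeps the vertical-line edges at $b_i$ and $e_i$ inside or adjacent to the block. Your admissible slots, and the corresponding pruning of the dual, need the same correction.

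\textbf{A smaller point: the face count.} Drop the plan to show that the dual has $O(\tau\sigma)$ faces by charging crossings of $\Gamma'$ to time steps. A single strip can contain $\Theta(\sigma^2)$ crossings of $\Gamma'$, so no such charging works. The correct count is $O(\tau\sigma+\mathrm{cr}(\Gamma'))$, since each crossing adds one cell. You may assume $\mathrm{cr}(\Gamma')\le\chi$, because otherwise the instance is trivially a no-instance. This is exactly how the paper's bound obtains the additive $\chi$ term.
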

\begin{proof}[Sketch]
Let $C_\mathrm{new}$ be the single new character with lifespan $[c, f]$.
We build a planar directed graph~$G$ whose nodes are the cells of the arrangement induced by $\Gamma'$ over $[c,f]$; see \cref{fig:dual}. 
Edges of $G$ cross character curves (in both directions) or vertical lines through time instants (left to right only), with a source $p$ at $t=c$ and a sink $q$ at $t=f$. 
We enforce meeting constraints by pruning: only cells inside or adjacent to a meeting block are retained if $C_\mathrm{new}$ is a participant, and
edges crossing the boundary of the block are removed otherwise. 
Every path from $p$ to $q$ crosses each vertical line exactly once, so a path of length $\ell$ corresponds to inserting $C_\mathrm{new}$ with $\ell - (f-c+1)$ crossings; a BFS on $G$ thus finds the optimal insertion.
Since each crossing in $\Gamma'$ splits one cell into two, $G$ has $O(\tau\cdot\sigma+\chi)$ nodes and edges, yielding the claimed running time.
\end{proof}

\section{Conclusions}

Our research draws a nearly complete picture of the complexity landscape of \gss and \lss. The key open question is whether \gss parameterized in $k$ lies in \FPT. Towards this question, one could consider the combination $k+\chi$ as a parameter.

\bibliographystyle{abbrvurl}
\bibliography{bibliography}

\begin{thebibliography}{10}

\bibitem{adfjkp-tppeg-15}
P.~Angelini, G.~Di~Battista, F.~Frati, V.~Jelínek, J.~Kratochvíl, M.~Patrignani, and I.~Rutter.
\newblock Testing planarity of partially embedded graphs.
\newblock {\em ACM Trans. Algorithms}, 11(4):32:1--32:42, 2015.
\newblock \href {https://doi.org/10.1145/2629341} {\path{doi:10.1145/2629341}}.

\bibitem{ars-epod-21}
P.~Angelini, I.~Rutter, and T.~P. Sandhya.
\newblock Extending partial orthogonal drawings.
\newblock {\em J. Graph Algorithms Appl.}, 25(1):581--602, 2021.
\newblock \href {https://doi.org/10.7155/jgaa.00573} {\path{doi:10.7155/jgaa.00573}}.

\bibitem{bgkmn-eopgdft-23}
S.~Bhore, R.~Ganian, L.~Khazaliya, F.~Montecchiani, and M.~Nöllenburg.
\newblock Extending orthogonal planar graph drawings is fixed-parameter tractable.
\newblock In E.~W. Chambers and J.~Gudmundsson, editors, {\em Computational Geometry (SoCG'23)}, volume 258 of {\em LIPIcs}, pages 18:1--18:16. Schloss Dagstuhl -- Leibniz-Zentrum für Informatik, 2023.
\newblock \href {https://doi.org/10.4230/LIPIcs.SoCG.2023.18} {\path{doi:10.4230/LIPIcs.SoCG.2023.18}}.

\bibitem{DBLP:journals/algorithmica/BlazejJRS25}
V.~Blazej, S.~Jana, M.~S. Ramanujan, and P.~Strulo.
\newblock On the parameterized complexity of {E}ulerian strong component arc deletion.
\newblock {\em Algorithmica}, 87(11):1669--1709, 2025.
\newblock \href {https://doi.org/10.1007/S00453-025-01336-6} {\path{doi:10.1007/S00453-025-01336-6}}.

\bibitem{dfgn-pcesl-24}
T.~Depian, S.~D. Fink, R.~Ganian, and M.~Nöllenburg.
\newblock The parameterized complexity of extending stack layouts.
\newblock In S.~Felsner and K.~Klein, editors, {\em Graph Drawing and Network Visualization (GD'24)}, volume 320 of {\em LIPIcs}, pages 12:1--12:17. Schloss Dagstuhl -- Leibniz-Zentrum für Informatik, 2024.
\newblock \href {https://doi.org/10.4230/LIPIcs.GD.2024.12} {\path{doi:10.4230/LIPIcs.GD.2024.12}}.

\bibitem{gdlmt-svwua-20}
E.~{Di Giacomo}, W.~Didimo, G.~Liotta, F.~Montecchiani, and A.~Tappini.
\newblock Storyline visualizations with ubiquitous actors.
\newblock In D.~Auber and P.~Valtr, editors, {\em Graph Drawing and Network Visualization (GD'20)}, volume 12590 of {\em LNCS}, pages 324--332. Springer, 2020.
\newblock \href {https://doi.org/10.1007/978-3-030-68766-3_25} {\path{doi:10.1007/978-3-030-68766-3_25}}.

\bibitem{dlm-sgdbp-18}
W.~Didimo, G.~Liotta, and F.~Montecchiani.
\newblock A survey on graph drawing beyond planarity.
\newblock {\em ACM Computing Surveys}, 52(1):4:1--4:37, 2019.
\newblock \href {https://doi.org/10.1145/3301281} {\path{doi:10.1145/3301281}}.

\bibitem{dhnw-ows-25}
A.~Dobler, T.~Hegemann, M.~Nöllenburg, and A.~Wolff.
\newblock Optimizing wiggle in storylines.
\newblock In V.~Dujmović and F.~Montecchiani, editors, {\em Graph Drawing and Network Visualization (GD'25)}, volume 357 of {\em LIPIcs}, pages 39:1--39:17. Schloss Dagstuhl -- Leibniz-Zentrum für Informatik, 2025.
\newblock \href {https://doi.org/10.4230/LIPIcs.GD.2025.39} {\path{doi:10.4230/LIPIcs.GD.2025.39}}.

\bibitem{djjmmn-rmecmsd-24}
A.~Dobler, M.~Jünger, P.~J. Jünger, J.~Meffert, P.~Mutzel, and M.~Nöllenburg.
\newblock Revisiting {ILP} models for exact crossing minimization in storyline drawings.
\newblock In S.~Felsner and K.~Klein, editors, {\em Graph Drawing and Network Visualization (GD'24)}, volume 320 of {\em LIPIcs}, pages 31:1--31:19. Schloss Dagstuhl -- Leibniz-Zentrum für Informatik, 2024.
\newblock \href {https://doi.org/10.4230/LIPIcs.GD.2024.31} {\path{doi:10.4230/LIPIcs.GD.2024.31}}.

\bibitem{eghkn-ep1d-20}
E.~Eiben, R.~Ganian, T.~Hamm, F.~Klute, and M.~Nöllenburg.
\newblock Extending partial 1-planar drawings.
\newblock In A.~Czumaj, A.~Dawar, and E.~Merelli, editors, {\em Automata, Languages, and Programming (ICALP'20)}, volume 168 of {\em LIPIcs}, pages 43:1--43:19. Schloss Dagstuhl--Leibniz-Zentrum für Informatik, 2020.
\newblock \href {https://doi.org/10.4230/LIPIcs.ICALP.2020.43} {\path{doi:10.4230/LIPIcs.ICALP.2020.43}}.

\bibitem{ganian_et_al:LIPIcs.ICALP.2021.72}
R.~Ganian, T.~Hamm, F.~Klute, I.~Parada, and B.~Vogtenhuber.
\newblock {Crossing-Optimal Extension of Simple Drawings}.
\newblock In N.~Bansal, E.~Merelli, and J.~Worrell, editors, {\em Automata, Languages, and Programming (ICALP'21)}, volume 198 of {\em LIPIcs}, pages 72:1--72:17. Schloss Dagstuhl -- Leibniz-Zentrum f{\"u}r Informatik, 2021.
\newblock \href {https://doi.org/10.4230/LIPIcs.ICALP.2021.72} {\path{doi:10.4230/LIPIcs.ICALP.2021.72}}.

\bibitem{gj-cnn-83}
M.~R. Garey and D.~S. Johnson.
\newblock Crossing number is {NP}-complete.
\newblock {\em SIAM J. Algebr. Discr. Methods}, 4(3):312--316, 1983.
\newblock \href {https://doi.org/10.1137/0604033} {\path{doi:10.1137/0604033}}.

\bibitem{gjlm-cmsv-16}
M.~Gronemann, M.~Jünger, F.~Liers, and F.~Mambelli.
\newblock Crossing minimization in storyline visualization.
\newblock In Y.~Hu and M.~Nöllenburg, editors, {\em Graph Drawing and Network Visualization (GD'16)}, volume 9801 of {\em LNCS}, pages 367--381. Springer, 2016.
\newblock \href {https://doi.org/10.1007/978-3-319-50106-2_29} {\path{doi:10.1007/978-3-319-50106-2_29}}.

\bibitem{hw-swp-24}
T.~Hegemann and A.~Wolff.
\newblock Storylines with a protagonist.
\newblock In S.~Felsner and K.~Klein, editors, {\em Graph Drawing and Network Visualization (GD'24)}, volume 320 of {\em LIPIcs}, pages 26:1--26:22. Schloss Dagstuhl - Leibniz-Zentrum f{\"{u}}r Informatik, 2024.
\newblock \href {https://doi.org/10.4230/LIPICS.GD.2024.26} {\path{doi:10.4230/LIPICS.GD.2024.26}}.

\bibitem{ht-bpg-20}
S.-H. Hong and T.~Tokuyama, editors.
\newblock {\em Beyond Planar Graphs}.
\newblock Springer, 2020.

\bibitem{DBLP:journals/jcss/JansenKMS13}
K.~Jansen, S.~Kratsch, D.~Marx, and I.~Schlotter.
\newblock Bin packing with fixed number of bins revisited.
\newblock {\em J. Comput. Syst. Sci.}, 79(1):39--49, 2013.
\newblock \href {https://doi.org/10.1016/J.JCSS.2012.04.004} {\path{doi:10.1016/J.JCSS.2012.04.004}}.

\bibitem{knpss-mcsv-15}
I.~Kostitsyna, M.~Nöllenburg, V.~Polishchuk, A.~Schulz, and D.~Strash.
\newblock On minimizing crossings in storyline visualizations.
\newblock In E.~{Di Giacomo} and A.~Lubiw, editors, {\em Graph Drawing and Network Visualization (GD'15)}, volume 9411 of {\em LNCS}, pages 192--198. Springer, 2015.
\newblock \href {https://doi.org/10.1007/978-3-319-27261-0_16} {\path{doi:10.1007/978-3-319-27261-0_16}}.

\bibitem{om-ses-10}
M.~Ogawa and K.~Ma.
\newblock Software evolution storylines.
\newblock In A.~C. Telea, C.~G{\"{o}}rg, and S.~P. Reiss, editors, {\em Software Visualization (SoftVis'10)}, pages 35--42. ACM, 2010.
\newblock \href {https://doi.org/10.1145/1879211.1879219} {\path{doi:10.1145/1879211.1879219}}.

\bibitem{plmnd-fmfmoswff-26a}
T.~Piselli, G.~Liotta, F.~Montecchiani, M.~Nöllenburg, and S.~{Di Bartolomeo}.
\newblock {$F^2$Stories}: A modular framework for multi-objective optimization of storylines with a focus on fairness.
\newblock {\em {IEEE} Trans. Vis. Comput. Graph.}, 32(1):747--757, 2026.
\newblock \href {https://doi.org/10.1109/TVCG.2025.3634228} {\path{doi:10.1109/TVCG.2025.3634228}}.

\bibitem{thm-efgsvfsd-15}
Y.~Tanahashi, C.-H. Hsueh, and K.-L. Ma.
\newblock An efficient framework for generating storyline visualizations from streaming data.
\newblock {\em {IEEE} Trans. Vis. Comput. Graph.}, 21(6):730--742, 2015.
\newblock \href {https://doi.org/10.1109/TVCG.2015.2392771} {\path{doi:10.1109/TVCG.2015.2392771}}.

\bibitem{tm-dcosv-12}
Y.~Tanahashi and K.-L. Ma.
\newblock Design considerations for optimizing storyline visualizations.
\newblock {\em {IEEE} Trans. Vis. Comput. Graph.}, 18(12):2679--2688, 2012.
\newblock \href {https://doi.org/10.1109/TVCG.2012.212} {\path{doi:10.1109/TVCG.2012.212}}.

\bibitem{trlcyw-iechs-18}
T.~Tang, S.~Rubab, J.~Lai, W.~Cui, L.~Yu, and Y.~Wu.
\newblock {iStoryline}: Effective convergence to hand-drawn storylines.
\newblock {\em {IEEE} Trans. Vis. Comput. Graph.}, 25(1):769--778, 2019.
\newblock \href {https://doi.org/10.1109/TVCG.2018.2864899} {\path{doi:10.1109/TVCG.2018.2864899}}.

\end{thebibliography}

\clearpage

\appendix

\section{Missing Proofs of \cref{sec:hardness}}\label{apx:hardness}
We conclude the proof of \cref{th:lse-k}. 

\thLSEk*
\label{th:lse-k*}

At high level, the gadgets are  arranged in such a way that each new character $N_i$ corresponds to one bin. Each new character must be routed through some sparse and some dense channels and the capacities are designed in such a way that a solution with local crossing number $\chi$ exists if and only if the collected sparse channels add up to the bin capacity $B$.

\begin{lemma}\label{le:proof-1}
    Suppose that the instance $\langle I, K, B \rangle$  of \textsc{EUBP} admits a solution $\ell$. Then, the corresponding instance $\langle \S, \chi, \Gamma', \S' \rangle$ of \lss admits a solution $\Gamma$.
\end{lemma}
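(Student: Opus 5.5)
We construct $\Gamma$ explicitly from the bin assignment $\ell$ and then check the crossing budgets character by character.

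\textbf{Routing.} Write $\ell(j)\in[K]$ for the bin of $x_j$, and identify bin $i$ with the new character $N_i$. All $N_i$ start in the meeting with $T$ just after $M_l$, stacked in the order $N_1,\dots,N_k$ from top to bottom next to $T$. They end in the meeting with $B$ just before $M_r$.

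In column $X_j$, the $2k-1$ channels are stacked vertically, separated by shared boundary characters. We send $N_{\ell(j)}$ through the sparse channel $H_k$. The other $k-1$ new characters go through dense channels, one per channel, keeping their relative order. Concretely, the $\ell(j)-1$ characters above $N_{\ell(j)}$ use dense channels among $H_1,\dots,H_{k-1}$ (top-aligned, say), and the $k-\ell(j)$ characters below it use channels among $H_{k+1},\dots,H_{2k-1}$ (bottom-aligned). Between columns, the new characters stay strictly between the boundary characters that frame the gadgets. They move vertically only inside the empty slots between consecutive columns, where the only curves present are boundary characters of the columns and $T$, $B$.

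The key design point is that new characters never cross a boundary character, a leading character of a saturator, $T$, or $B$. They also never cross each other, since their relative order is preserved throughout. The only crossings a new character incurs are therefore with the central characters of the channels it traverses. Each such traversal is done in the zig-zag manner of the figure, giving exactly $c$ crossings for a $c$-channel.

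\textbf{Counting crossings on new characters.} $N_i$ traverses the sparse channel in the columns $X_j$ with $\ell(j)=i$, and a dense channel in the remaining $n-|\ell^{-1}(i)|$ columns. Its crossing count is
\[
\sum_{j:\ell(j)=i} x_j+\Delta\sum_{j:\ell(j)\ne i} x_j = B+\Delta(KB-B)=\chi,
\]
using $\sum_{j=1}^n x_j=KB$ and that bin $i$ sums to exactly $B$. This is the main place where the exactness of EUBP is used.

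\textbf{Counting crossings on existing characters.} A central character of a $c$-channel is crossed $\chi-c$ times inside its saturator, plus $c$ times by the unique new character traversing it, for $\chi$ in total. A channel that no new character traverses stays below $\chi$, which is fine since only the upper bound matters. By \cref{pr:central}, every other existing character already has $\chi$ crossings in $\Gamma'$ and receives none from the new ones. Non-leading saturator characters are untouched.

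\textbf{Meeting constraints and the main obstacle.} The two meetings involving new characters hold because all $N_i$ are consecutive and adjacent to $T$ (respectively $B$) at those instants. The constraints for length-one meetings of existing characters hold because each new curve passes through a slot that does not separate any meeting pair at that instant. This needs care in the channel zig-zag: the new character must sit outside the $\{C,C_t\}$ or $\{C,C_b\}$ pair at each meeting instant, which is exactly what forces it to cross $C$ at every step.

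The main obstacle I expect is verifying that the vertical repositioning between consecutive columns can be done without crossing boundary characters. If the column gadgets are drawn with aligned boundary characters in $\Gamma'$, no repositioning is needed at all and each new character simply follows its channel level. Otherwise, I would argue from \cref{fig:all} that consecutive columns are joined so that channel $H_p$ of $X_j$ opens into a region containing channel $H_{p'}$ of $X_{j+1}$ for all $p'$. In that case, shifting between levels crosses only curves outside the $\chi$-saturated set, or costs nothing. This is the step I would check first against the precise layout.
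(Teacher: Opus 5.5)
Your proposal is correct and follows the paper's proof. It uses the same routing: $N_{\ell(j)}$ goes through the sparse channel of $X_j$, and the other new characters go in order through distinct dense channels, which works because there are $k-1$ dense channels on each side. It also uses the same counts: $B+\Delta(KB-B)=\chi$ for each $N_i$, and $\chi_c+c=\chi$ for each traversed central character.

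The paper also leaves the step you flag to the figure; it is settled by lifespans, not by alignment. The inner boundary, central and saturator characters of a column are active only within that column's time span. So between columns the new characters can move to whatever slots they need with no crossing. This is required, not just convenient: levels must change between columns because the sparse channel hosts different $N_i$ in different columns. Also, no crossing at all can be afforded there, since each $N_i$ already has exactly $\chi$ crossings. This rules out both your ``no repositioning'' case and your ``crosses only non-saturated curves'' case.
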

\begin{proof}
    In order to construct $\Gamma$ from $\Gamma'$, we must draw the curves corresponding to the $k$ new characters. In particular, we associate $N_i$ to the $i$-th subset of the solution $\ell$ as follows. If the integer $x_j$ belongs to this subset, then $N_i$ is routed through the sparse channel of $X_j$, otherwise to one of the dense channels of $X_j$. This can be done such that no two new characters cross each other, since we have $k-1$ dense channels above and below the sparse channel of every column. See \cref{fig:ex} for an illustration.
    
    To establish the validity of this solution, we shall verify that each character is crossed at most $\chi$ times. Recall that no channel is traversed by more than one character due to \cref{pr:channel}. The only existing characters being crossed by the new characters are the central characters of the traversed channels. Consider one such central character: it is crossed by exactly one new character by construction, and hence in total it receives $\chi_c + c = \chi - c + c = \chi$ crossings, as desired. On the other hand, each new character receives:
    \begin{itemize}
        \item $B$ crossings over all the traversed sparse channels, because the integers of any subset of $\ell$ sum up to exactly $B$;
        \item  $\Delta(\sum_{j=0}^n x_j - B)$ crossings over all the traversed dense channels, as $\sum_{j=0}^n x_j - B$ is the sum of the integers that are not part of the subset. On the other hand, we noted that $\sum_{j=0}^n x_j=k \cdot B$, hence $\Delta(\sum_{j=0}^n x_j - B)=\Delta \cdot B \cdot (K-1)$.
    \end{itemize}
    
\noindent Overall, $B+\Delta \cdot B \cdot (K-1) = \chi$, as desired.
\end{proof}

\begin{lemma}\label{le:proof-2}
    Suppose that the instance $\langle \S, \chi, \Gamma', \S' \rangle$ of \lss admits a solution $\Gamma$. Then, the corresponding instance $\langle I, K, B \rangle$  of \textsc{EUBP} admits a solution $\ell$.
\end{lemma}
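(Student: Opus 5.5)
Given a solution $\Gamma$ of \lss, I will extract a partition of $I$ into $K$ bins by assigning $x_j$ to bin $i$ whenever $N_i$ traverses the sparse channel of $X_j$. The argument runs in four steps.

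\textbf{Step 1: each new character traverses exactly one channel in every column.} By \cref{pr:central}, every boundary character, $T$, $B$ and every leading character of a $\chi$-saturator already carries $\chi$ crossings in $\Gamma'$. Hence no new character can cross them. Each $N_i$ meets $T$ right after $M_l$ and $B$ right before $M_r$, so its curve is trapped between $T$ and $B$. Within a column $X_j$, the boundary characters form a vertical chain separating the $2k-1$ channels, so $N_i$ must pass through the interior of exactly one channel $H$. Inside a $c$-channel, the central character alternately meets the top and the bottom boundary character at $c$ consecutive instants. Thus a curve passing through $H$ from left to right must cross the central character at least $c$ times, which is what ``traverse'' means. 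By \cref{pr:channel}, distinct new characters use distinct channels of $X_j$.

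\textbf{Step 2: which channels of a column are used.} There are $k$ new characters and $2k-1$ channels per column, so $k$ channels are used. I claim that any two new characters are separated by at least one unused channel. The reason is that a boundary character shared by two traversed channels cannot be crossed, so consecutive new characters lie in distinct channels. I will not argue that the sparse channel is necessarily used. Instead, I will count and use an averaging argument.

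\textbf{Step 3: counting crossings.} Let $s_i$ be the total capacity of the sparse channels traversed by $N_i$. In every column that $N_i$ crosses through a dense channel, it receives at least $\Delta x_j$ crossings. So $N_i$ has at least
\[
s_i+\Delta\Bigl(\sum_{j\notin S_i}x_j\Bigr)=s_i+\Delta\bigl(KB-s_i\bigr)
\]
crossings, where $S_i$ is the set of columns whose sparse channel $N_i$ uses. Requiring this to be at most $\chi=B+\Delta B(K-1)$ gives $(\Delta-1)s_i\ge(\Delta-1)B$, hence $s_i\ge B$. Each sparse channel is used by at most one character (\cref{pr:channel}), so the sets $S_i$ are disjoint. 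Therefore $\sum_i s_i\le KB$, and with $s_i\ge B$ for all $i$ this forces $s_i=B$ for every $i$ and $\bigcup_i S_i=[n]$.

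\textbf{Step 4: the partition.} Consequently every sparse channel is used by some new character, and the sets $S_1,\dots,S_K$ form the desired partition, each summing to exactly $B$.

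The main obstacle is Step 1. Its rigorous version needs the forced lower bound of $c$ crossings for passing through a channel, and it must rule out an $N_i$ escaping a column around the outer boundary or threading between gadgets. I would handle this by observing that the boundary characters, together with $T$ and $B$ (which meet consecutive boundary characters), form uncrossable barriers. The one point to check carefully is that the central character cannot be bypassed: since it alternately meets the top and the bottom boundary character, any curve confined between the two boundaries must switch sides relative to it at every meeting time.
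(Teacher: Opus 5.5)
Your core argument (Steps 1, 3 and 4) is the paper's argument. Each new character picks up at least the capacity of one channel per column. The bound $s_i+\Delta(KB-s_i)\le\chi$ then forces $s_i\ge B$. Finally, disjointness of the used sparse channels (\cref{pr:channel}) together with $\sum_j x_j=KB$ forces $s_i=B$ for every $i$. Stating this as a lower bound plus one averaging step is a slightly cleaner version of the paper's two-case argument on $\delta$. It also lets you skip the paper's ``crossing-minimal solution, so no crossings with non-leading saturator characters'' step. You further make explicit two things the paper leaves implicit: that each new character uses exactly one channel per column, and that $\bigcup_i S_i=[n]$. As in the paper, Step 3 relies on $\sum_j x_j=KB$. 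This is not implied by the existence of $\Gamma$, so the reduction must guarantee it, e.g., by outputting a trivial no-instance otherwise.

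The claim in Step 2, that any two new characters are separated by an unused channel, is false and should be deleted. Your justification only shows that they occupy distinct channels. If the claim held, then $k$ characters in $2k-1$ channels would be forced into exactly $H_1,H_3,\dots,H_{2k-1}$. For even $k$ the sparse channel $H_k$ would then never be used, so $s_i=0<B$ for all $i$. Your own Step 3 would then make every constructed instance a no-instance, contradicting \cref{le:proof-1}. Concretely, in \cref{fig:ex} ($k=2$) some new character uses the middle channel $H_2$ in every column, so the two new characters always occupy adjacent channels. Steps 3 and 4 never use this claim, so removing it leaves your proof intact.
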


\begin{proof}
    Consider the solution $\Gamma$ and the routing of the new characters $N_1$, $N_2$, $\dots$, $N_k$. By \cref{pr:central}, the only characters with fewer than $\chi$ crossings in $\Gamma'$ are the central characters of the channel gadgets, and the non-leading characters of the saturator gadgets.  One easily observes that crossings of new characters with non-leading characters of the saturator gadgets can always be avoided and hence, by focusing on a crossing-minimal solution, we shall assume that $\Gamma$ does not contain such crossings. 
    
    Also, by \cref{pr:channel}, we know that any channel is traversed by at most one new character. For any new character $N$, let $B' = B-\delta$ be the sum of the capacities of the sparse channels traversed by $N$. Then, overall, $N$ is crossed $B'+\Delta \cdot (\sum_{j=0}^n x_j -B')$ times, which can be written as  $B-\delta+\Delta \cdot (\sum_{j=0}^n x_j -B+\delta) = B-\delta + \Delta \cdot \delta + \Delta(\sum_{j=0}^n x_j -B) = \delta(\Delta-1)+B+ \Delta(\sum_{j=0}^n x_j -B)=\delta(\Delta-1)+B+ \Delta \cdot B \cdot (K-1)=\delta(\Delta-1)+\chi$.

    We claim that $\delta=0$, that is, $B'=B$.  If this is the case, a solution $\ell$ for \textsc{EUBP} can be constructed by mapping the $i$-th subset of $I$ to the integers corresponding to the capacities of the sparse channels traversed by $N_i$.
    
    Suppose first, for a contradiction, that $\delta>0$, that is, $B'< B$. Then, since $\Delta>1$, $N$ is crossed more than $\chi$ times, which contradicts the fact that $\Gamma$ is a valid solution of \lss.

    Suppose now, still for a contradiction, that $\delta < 0$, that is, $B'> B$. Since $\sum_{j=0}^n x_j=K \cdot B$, if $N$ traverses a set of sparse channels whose capacities sum up to $B'>B$, then there is a different new character $N^*$  which traverses a set of sparse channels whose capacities sum up to $B^*<B$. Then we can apply the above argument to show that $N^*$ is crossed more than $\chi$ times, which again contradicts the fact that $\Gamma$ is a valid solution of \lss.
\end{proof}

The next lemma concludes the proof by discussing the size of the constructed instance and its parameters.

\begin{lemma}\label{le:sigma}
The constructed storyline instance $\S$ is such that: (1) $k=K$ and $\mu=2$; (2) $\sigma_i \le 9k$, for each $i \in [\tau]$; (3) it has size $O(n \cdot k)$.
\end{lemma}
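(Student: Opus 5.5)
The plan is to check the three claims of the statement one at a time, reading each off the construction of $\S'$ and $\S$ described above.

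\textbf{Claim (1).} By construction, $\S$ and $\S'$ differ in exactly the characters $N_1,\dots,N_K$, so $k=K$. The only meetings that involve new characters are the two added meetings: one right after $M_l$ with $T$, and one right before $M_r$ with $B$. Hence $\mu=2$.

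\textbf{Claim (2).} This is the main point, and I would prove it by arranging the gadgets along the time axis so that few of them are simultaneously active. I would place the column gadgets $X_1,\dots,X_n$ at pairwise disjoint time intervals, left to right. Inside a column, each channel's saturators (for the boundary and central characters) would occupy disjoint short time windows. The active characters at any time $t$ are then the following.
\begin{itemize}
\item The $k$ new characters, which stay active throughout the middle part of the layout.
\item $T$ and $B$, which are always active.
\item The characters of the current column. These are the $2k$ boundary characters (channels share boundary characters, giving $2k-1+1$) and the $2k-1$ central characters.
\item At most one pair of non-leading saturator characters active at time $t$. To ensure this, I would schedule all saturators (the $\chi$-saturators of the boundaries, the $\chi_c$-saturators of the centers, and those of $T$ and $B$) sequentially in time, before or after the channel meetings.
\end{itemize}
Lifespans of boundary and central characters must cover both their channel meetings and their saturator, so the column's own characters are active over the whole column interval, but no longer. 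This gives at most $k+2+2k+(2k-1)+O(1)\le 9k$ for $k\ge1$. The constant slack also absorbs the meetings of $T$ and $B$ with the outermost boundary characters of each column and the two saturators at the start. The delicate part is the bookkeeping that keeps saturator characters from overlapping, and the observation that a column's characters are not alive outside that column's interval.

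\textbf{Claim (3).} Count characters and meetings. There are $n$ columns, each with $O(k)$ characters and $O(k)$ gadgets, plus $O(1)$ extra characters, giving $O(nk)$ characters. Saturators and channels contribute $O(\chi)$ and $O(c)$ meetings, and here $\chi$ and the capacities are polynomial in the unary input. So I would state the size as $O(n\cdot k)$ gadgets/characters. I would note that, since the numbers are encoded in unary, the overall instance size (including time instants) is polynomial, which is what the reduction requires.
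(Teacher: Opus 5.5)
Your argument is correct, and for items (1) and (3) it agrees with the paper's. On (3) you are actually more careful than the paper. The paper asserts $O(k)$ meetings per column, although each $s$-saturator alone has $s+1$ meetings. Your reading is what the reduction actually needs: $O(nk)$ characters, and polynomial total size because $\chi$ and the capacities are bounded by the unary input.

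For item (2) you take a different route. You complete the construction with a schedule in which all saturators run one after another, so at most one non-leading saturator character is alive at any time. This gives $k+2+2k+(2k-1)+1=5k+2\le 9k$.

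The paper imposes no schedule inside a column. It simply counts everything a column contains: $4k-1$ saturators with two characters each, whose leading characters are exactly your $2k$ boundary and $2k-1$ central characters, plus $T$, $B$ and the $k$ new characters. That totals $2(4k-1)+2+k=9k$. So the constant $9$ is exactly what remains when all non-leading saturator characters of a column are alive simultaneously, and the step you flag as delicate is not needed.

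The trade-off is as follows. The paper's count is robust to whatever intra-column timing the construction uses, including saturators running in parallel. Yours commits to a particular completion and stretches each column to $\Theta(k\chi)$ time instants, which is still polynomial, but in return gives the sharper bound $5k+2$.

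Both arguments rest on the same fact, which you make explicit and the paper leaves tacit: different columns occupy disjoint time intervals, and a column's characters live only inside its own interval.
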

\begin{proof}
Item (1) immediately follows by  construction of $\S$.
Concerning item (2), recall that $\sigma_i$ denotes the number of active characters at time instant $i$. One easily verifies that the largest number of characters co-occur inside column gadgets. In particular, we shall focus on those time instants where the different saturator gadgets overlap. For each column, we have $4k-1$ saturators, each of which contains two characters, plus $T$, $B$, and the $k$ new characters. Thus in total we have $\sigma_i \le 9k$, for any time instant $i \in [\tau]$. 
Finally, item (3) is due to the fact that  each column gadget contains $O(k)$ characters and meetings, and that we have $n$ such gadgets.
\end{proof}

\smallskip This concludes the proof of the theorem; an example of a solution of \lss constructed from a solution of \textsc{EUBP} can be seen in \cref{fig:ex}. 

\section{Missing Proofs of \cref{sec:XP}}

\thLseS*
\label{th:lse-s*}
\vspace{-\topsep}
\begin{proof}
We prove by induction on $t$ that a pair $(P_t,u_t)$ is marked feasible by the DP if and only if there exists a partial extension of $\Gamma'$ up to time $t$ that realizes placement $P_t$, budgets $u_t$ and respects all constraints. 
The base case $t=0$ holds by construction. For the inductive step, assume that the claim holds for time $t-1$. If the DP marks $(P_t,u_t)$ as feasible, then it comes from some feasible $(P_{t-1},u_{t-1}$) and a transition that respects the meeting constraints at $t$ and updates budgets exactly by the crossings induced in strip $(t-1,t)$. Concatenating the corresponding partial drawing up to $t-1$ with the strip $(t-1,t)$ yields a valid partial drawing up to $t$ with the stated placement and budgets.

Conversely, any valid partial drawing up to $t$ restricts to a valid partial drawing up to $t-1$ for some $(P_{t-1},u_{t-1})$. Comparing the two consecutive time orders determines exactly the increments $cr_t(C)$, hence the DP includes the corresponding transition and marks $(P_t,u_t)$ as feasible. This establishes the invariant, and for $t=\tau$ it yields the equivalence between acceptance and existence of an extension with local crossing number at most $\chi$.

It remains to bound the running time in terms of $\sigma$, $\chi$, $k$, and $\tau$. Fix a time instant $t$. 

The number of possible slot assignments at $t$ is at most $(|O_t|+1)^{|N_t|}\le(\sigma+1)^{\min\{k,\sigma\}}$. For each slot assignment, the order of the $|N_t|$ new characters can be chosen in at most $|N_t|!\le(\min\{k,\sigma\})!$ ways. Thus, the number of possible placements satisfies
\[ |\{P_t\}|\le(|O_t|+1)^{|N_t|}\cdot |N_t|!\le(\sigma+1)^{\min\{k,\sigma\}}\cdot(\min\{k,\sigma\})!\in\sigma^{O(\min\{k,\sigma\})}.\]

For the budget vector, each entry is an integer in $\{0,\ldots,\chi\}$. %
Hence, for a fixed placement, the number of possible budget vectors is at most $(\chi+1)^{|N_t|}\le(\chi+1)^{O(\min\{k,\sigma\})}$. Therefore, the total number of states at time $t$ is bounded by
\[\sigma^{O(\min\{k,\sigma\})}\cdot(\chi+1)^{O(\min\{k,\sigma\})}=(\sigma+\chi)^{O(\min\{k,\sigma\})}.\]

For each state at time $t-1$, we enumerate all placements at time $t$, i.e., at most $\sigma^{O(\min\{k,\sigma\})}$ candidates. For a state $(P_t,u_t)$, we can compute the number of crossings in the strip $(t-1,t)$ in $O(\sigma^2)$ time by counting. As there can be at most $\sigma$ ongoing meetings, we can check in $O(\sigma^2)$ time that the meeting constraints are satisfied at time $t$ and that no character crosses an ongoing meeting between time $t-1$ and $t$ that it does not participate in.
Thus, we need $(\sigma+\chi)^{O(\min\{k,\sigma\})}\cdot\sigma^{O(\min\{k,\sigma\})}\cdot O(\sigma^2)$ time to find all feasible DP states at time $t$. Consequently, the total running time over all $\tau$ layers is
\[\tau\cdot(\sigma+\chi)^{O(\min\{k,\sigma\})}\cdot\sigma^{O(\min\{k,\sigma\})}\cdot O(\sigma^2)=\tau\cdot(\sigma+\chi)^{\min\{k,\sigma\}}.\qedhere\]
\end{proof}

\lemSeMu*
\label{th:se-mu*}
\vspace{-\topsep}
\begin{proof}
    Consider any two new characters $C$ and $C'$. We say that $C$ and $C'$ are equivalent, if and only if they participate in the same meetings. This relation partitions the set of new characters into $2^\mu$ equivalence classes. 
    
    Consider an equivalence class $S$ of characters. If a solution of \gs exists, we claim that there is a (possibly different) solution for which the following two conditions hold: (i) no two characters in $S$ cross each other, and (ii) in any meeting participated by the characters of $S$, such characters form contiguous subgroups in the meeting. Notably, if (i) and (ii) hold, then between any two time instants $t$ and $t'$, it follows that any two characters of $S$ cross exactly the same characters, and  in particular characters  of two equivalent classes are such that either  they all pairwise cross or they are all pairwise disjoint. Consequently, we can replace each set of characters in the same class with a single representative character. Next, we can apply the DP algorithm described in the previous section with one simple difference: if a new character $C$ crosses an existing character then the number of crossings to be counted is equal to the cardinality of the equivalence class of $C$, while if $C$ crosses another new character $C'$ then the number of crossings to be counted is equal to the sum of cardinality of the equivalence classes of $C$ and $C'$. Also, since $\sigma$ can be replaced with $\mu$ in this setting, the statement follows.

    It remains to prove the claim. If (i) or (ii) are violated in a solution of \gs,   
    there are two characters $C$ and $C'$ in the same equivalence class such that they cross each other, or there is a character $C''$ from a different equivalence class in some meeting whose position is between $C$ and $C'$. We call such a pair \emph{inconsistent} in the following. If a pair of inconsistent characters $C$ and $C'$ exists, we let $C$ be the character that collects at most the same number of crossings as $C'$. Next, we modify the solution by redrawing $C'$ such that it is always the character below $C$ in every meeting. Notice that, by the previous assumption, the new solution contains at most the same number of crossings as the previous one, and contains one less pair of inconsistent characters. By repeating this procedure for every pair of inconsistent characters, we obtain a solution that satisfies the claim.
\end{proof}

\section{Missing proofs of \cref{sec:onec}}

\thmKPoly*
\label{thm:k1-poly*}
\vspace{-\topsep}
\begin{proof}
  Let $C_\mathrm{new}$ be the single new character with lifespan $[c, f]$.
  Since $\Gamma'$ is fixed, all crossings among the characters in $\mathcal{C}'$ are
  already determined, so minimizing the total crossing number of $\Gamma$ reduces to
  minimizing the crossings incurred by $C_\mathrm{new}$ with the existing characters.
  We achieve this via a shortest-path computation on the dual of the arrangement
  induced by $\Gamma'$.

  Restrict $\Gamma'$ to the time interval $[c, f]$ and draw a vertical line at every
  time instant $t \in \{c, \ldots, f\}$.  The character curves together with
  these vertical lines induce a planar arrangement whose bounded faces we call
  \emph{cells}. We define a directed graph $G$ by placing a node inside every cell
  and adding edges between nodes whose cells share a boundary segment: if two cells
  are separated by a character curve, we add edges in both directions, and if they are
  separated by a vertical line, we add a directed edge from left to right.  We add a
  source $p$ connected to every cell whose left boundary lies on the vertical line at
  $t = c$, and a sink $q$ to which every cell whose right boundary lies on the
  vertical line at $t = f$ connects.
 
  We then prune $G$ to enforce the meeting constraints.  If a meeting
  $M_i$ does not involve $C_\mathrm{new}$, its participants form a consecutive,
  uncrossable block throughout $[b_i, e_i]$, so $C_\mathrm{new}$ may not cross into
  or out of the block; we remove all edges that cross a boundary curve of
  $\mathcal{C}(M_i)$ for $t \in [b_i, e_i]$.  If $M_i$ does involve $C_\mathrm{new}$,
  the full participant set $\mathcal{C}(M_i) \cup \{C_\mathrm{new}\}$ must be
  consecutive; since $\mathcal{C}(M_i) \cap \mathcal{C}'$ already forms a consecutive
  block in $\Gamma'$, we remove all cells that are neither inside nor immediately
  adjacent to this block for $t \in [b_i, e_i]$, and at times $b_i$ and $e_i$ we
  additionally remove all vertical-line edges that are not inside or immediately adjacent to the block.
  The remaining case is a meeting consisting solely of $C_\mathrm{new}$ and one
  character $C_j \in \mathcal{C}'$; we retain only the two cells
  neighboring $C_j$ for $t \in [b_i, e_i]$ and remove all vertical-line edges at
  times $b_i$ and $e_i$ that do not involve those cells.
 
  Every path from $p$ to $q$ in $G$ crosses each of the $f - c + 1$ vertical lines
  exactly once. Hence, every path from $p$ to $q$ of length $x$ corresponds to inserting character $C_\mathrm{new}$ with $x - f + c - 1$ crossings with existing characters.  The
  answer to \gss is \emph{yes} if and only if this number does not exceed the
  remaining crossing budget.
 
  Since $G$ is planar, its number of edges is linear in its number of nodes.
  Without any crossings in $\Gamma'$, there are $O(\tau \cdot \sigma)$ cells; every
  crossing in $\Gamma'$ splits one cell into two, so the total number of cells, and
  hence nodes and edges of $G$, is $O(\tau \cdot \sigma + \mathrm{cr}(\Gamma'))\subseteq O(\tau \cdot \sigma + \chi)$. We can therefore use a BFS traversal that runs in $O(\tau \cdot \sigma + \chi)$ time to solve \gss.
\end{proof}
\end{document}